\begin{document}
	
\allowdisplaybreaks

\renewcommand{\thefootnote}{$\star$}

\newcommand{\arXivNumber}{1509.07288}

\renewcommand{\PaperNumber}{094}

\FirstPageHeading
	
\ShortArticleName{Extended Hamiltonians, Coupling-Constant Metamorphosis and the Post--Winternitz System}
	
\ArticleName{Extended Hamiltonians, Coupling-Constant\\ Metamorphosis and the Post--Winternitz System\footnote{This paper is a~contribution to the Special Issue
on Analytical Mechanics and Dif\/ferential Geometry in honour of Sergio Benenti.
The full collection is available at \href{http://www.emis.de/journals/SIGMA/Benenti.html}{http://www.emis.de/journals/SIGMA/Benenti.html}}}
	
\Author{Claudia Maria {CHANU}, Luca {DEGIOVANNI} and Giovanni {RASTELLI}}
	
\AuthorNameForHeading{C.M.~Chanu, L.~Degiovanni and G.~Rastelli}
	
\Address{Dipartimento di Matematica, Universit\`a di Torino,  Torino, via Carlo Alberto~10, Italy}
\Email{\href{mailto:claudiamaria.chanu@unito.it}{claudiamaria.chanu@unito.it}, \href{mailto:luca.degiovanni@gmail.com}{luca.degiovanni@gmail.com}, \href{mailto:giovanni.rastelli@unito.it}{giovanni.rastelli@unito.it}}

\ArticleDates{Received September 26, 2015, in f\/inal form November 16, 2015; Published online November 24, 2015}
	
\Abstract{The coupling-constant metamorphosis is applied to modif\/ied extended Hamiltonians and suf\/f\/icient conditions are found in order that  the transformed high-degree f\/irst integral of the transformed Hamiltonian is determined by the same algorithm which computes the corresponding f\/irst integral of the original extended Hamiltonian. As examples, we consider the Post--Winternitz system and the 2D caged anisotropic oscillator.}
	
\Keywords{superintegrable systems; extended systems; coupling-constant metamorphosis}
	
\Classification{37J35; 70H33}

\rightline{\it This article is dedicated to Sergio Benenti, our mentor, colleague and friend.}

\renewcommand{\thefootnote}{\arabic{footnote}}
\setcounter{footnote}{0}

\section{Introduction}
Classical and quantum Hamiltonian systems depending on a rational parameter $\lambda$ and admitting f\/irst integrals, or symmetry operators, of degree determined by $\lambda$ have been recently the object of research in integrable systems theory, with a particular interest on superintegrable and separable systems. In many examples the conf\/iguration manifolds of these systems are constant-curvature Riemannian or pseudo-Riemannian manifolds of  f\/inite dimension \cite{Bo,6bis,7bis,Ra2}, but some examples  of non-constant-curvature manifolds are also known \cite{KKMncf}.

In a series of articles, we proposed an algorithm for the construction, given a suitable $N$-dimensional Hamiltonian $L$, of classical $(N+2)$-dimensional Hamiltonians with f\/irst integrals depending on a rational parameter; this approach started from the analysis of the Jacobi--Calogero and Wolfes systems~\cite{CDR0}. The construction is not restricted to superintegrable or se\-pa\-rable systems, even if it allows to build new superintegrable systems from known ones~\cite{CDRsuext}.  Although it  involves a privileged coordinate system, the application of the algorithm is intrinsically characterized~\cite{CDRgen} and it is ultimately rooted into the geometry of the manifold (Poisson or symplectic) where $L$ is def\/ined, imposing conditions, as instance, on Riemannian curvature, on the geo\-metry of warped manifolds~\cite{CDRGhent} (see also, for example, \cite{MR, Ta}) and on a~particular type of master symmetries~\cite{CDRgen,Ra3}.

 The Hamiltonian systems admitting such an algorithmic construction are called ``extensions'' and  many of the known Hamiltonians with high-degree f\/irst integrals depending on $\lambda$ are indeed extensions of some other Hamiltonian~$L$~\cite{TTWcdr}. The algorithm consists essentially in the determination of the $\lambda$-dependent f\/irst integral through the power of a dif\/ferential operator generated by the Hamiltonian vector f\/ield of~$L$, applied to some suitable function~$G$. We stress the fact that our construction of the $\lambda$-dependent f\/irst integral does not assume integrability, separability or superintegrability of the system, dif\/ferently from all other approaches, and provides a compact expression of the real f\/irst integral itself.

With this article we modify  the algorithm in order to apply it to the Post--Winternitz (PW) system. The PW-system is a $\lambda$ dependent Hamiltonian system including a Kepler--Coulomb potential term~\cite{PW} originally obtained by coupling-constant metamorphosis (CCM)~\cite{H} of the Tremblay--Turbiner--Winternitz (TTW) system~\cite{TTW}. The CCM is a powerful tool for obtaining new integrable or superintegrable Hamiltonian systems from known ones (in particular, when applied to St\"ackel separable systems it is called ``St\"ackel transform''~\cite{St})  and has been recently extensively employed in the study and classif\/ication of superintegrable Hamiltonian systems (see~\cite{KMccm, 7bis} and references therein).

It is proved in \cite{TTWcdr} that the TTW system is what we call a~``modif\/ied extension'' and, therefore, that its $\lambda$-dependent f\/irst integral can be computed through the power of some operator~$W$ applied to a function $G$. In the following, we apply the CCM to modif\/ied extensions  and show in particular that the $\lambda$-dependent f\/irst integral of the PW system is equal to some  power of the CCM of the operator~$W$ applied to the same function $G$ appearing in the construction of the TTW system as extension. Therefore, even if the PW Hamiltonian cannot be written as a~modif\/ied extension in the same way of the TTW system, nevertheless the same algorithm for the determination of the f\/irst integral works as well. This suggests the  def\/inition of a class of Hamiltonian systems which includes all the systems we call  ``extended Hamiltonians'' and those systems related to them as the~PW is related to the~TTW.

In Section~\ref{section2} we recall the def\/inition of CCM and its connection with the PW system. In Section~\ref{section3} we review the theory of extended Hamiltonian systems.
In Section~\ref{section4}, where the main results are exposed, we study the application of CCM to extended systems and the results are completed by two examples: the TTW system, from which the PW system is obtained, and the caged anisotropic oscillator. In Section~\ref{section5} the results of the article are summarized,  a~new class of Hamiltonian systems is def\/ined, to include extended Hamiltonians and related systems; a new direction of research is suggested.

\section{Coupling-constant metamorphosis}\label{section2}

The coupling-constant metamorphosis (CCM) \cite{KMccm}  transforms  integrable or superintegrable systems in new integrable or superintegrable ones, by mapping f\/irst integrals in f\/irst integrals. We take the following statement from Theorem~1 of~\cite{PW} (where the CCM is called St\"ackel transform), Theorem~1 of~\cite{KMccm} and, ultimately, from~\cite{H}

 \begin{theorem}\label{theorem1}
Let us consider a Hamiltonian $H =\hat H - \tilde EU$ in canonical coordinates~$(q^i,p_i)$, where $\hat H(q^i,p_i)$ is independent
of the arbitrary parameter $\tilde E$ and $U(q^i)$, with an integral of the motion~$K$ $($depending on $\tilde E)$. If we define the CCM  of~$H$ and~$K$ as $\tilde H = U^{-1}( \hat H -E)$ and $\tilde K = K|_{\tilde E=\tilde{H}}$ then $\tilde K$ is an integral of the motion for~$\tilde H$.
 \end{theorem}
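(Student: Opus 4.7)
The plan is to compute the Poisson bracket $\{\tilde H,\tilde K\}$ directly, exploiting the hypothesis $\{H,K\}=0$. The key subtlety throughout is that in this hypothesis $\tilde E$ is an inert parameter (the bracket is taken with respect to the canonical variables only), whereas in the definition of $\tilde K$ the parameter $\tilde E$ is replaced by the phase-space function $\tilde H$. As a first step I would rewrite the assumption in the parametric identity
\[
\{\hat H,K\}=\tilde E\,\{U,K\},
\]
obtained by applying the Leibniz rule to $\{\hat H-\tilde E U,K\}=0$ while treating $\tilde E$ as a constant; both sides still depend parametrically on $\tilde E$.

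Next I would unpack $\tilde K(q,p)=K(q,p,\tilde E)|_{\tilde E=\tilde H(q,p)}$ by means of the chain rule for partial derivatives. The partials of $\tilde K$ split into a contribution from the direct $(q,p)$-dependence of $K$ and an indirect contribution coming through $\tilde H$; assembling these into a Poisson bracket against $\tilde H$, the indirect piece collapses into a factor $\{\tilde H,\tilde H\}=0$. Thus
\[
\{\tilde H,\tilde K\}=\{\tilde H,K\}|_{\tilde E=\tilde H},
\]
and the task reduces to showing that $\{\tilde H,K\}$, computed for $\tilde E$ constant and then evaluated at $\tilde E=\tilde H$, vanishes.

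The last step is to expand $\{\tilde H,K\}=\{U^{-1}(\hat H-E),K\}$ via the Leibniz rule, obtaining $U^{-1}\{\hat H,K\}-(\hat H-E)U^{-2}\{U,K\}$, and to substitute the parametric identity from the first step. After the final replacement $\tilde E=U^{-1}(\hat H-E)$, the two terms match and cancel exactly, since $U\tilde H=\hat H-E$. The only genuine obstacle is the bookkeeping between the parameter $\tilde E$, which commutes with the Poisson bracket during the computation, and the phase-space function $\tilde H$ that is substituted for it at the end; once this distinction is kept straight, the statement follows from a short Leibniz-rule manipulation.
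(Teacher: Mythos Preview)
Your argument is correct, and the bookkeeping you flag---treating $\tilde E$ as an inert parameter throughout the Poisson-bracket calculation and only substituting $\tilde E=\tilde H$ at the very end---is exactly the point. The chain-rule step giving $\{\tilde H,\tilde K\}=\{\tilde H,K\}\big|_{\tilde E=\tilde H}$ is clean, and the cancellation in the final step follows immediately from $U\tilde H=\hat H-E$.

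As for comparison with the paper: note that the paper does \emph{not} supply its own proof of this theorem. It is quoted as a known result, with explicit attribution to Post--Winternitz~\cite{PW}, Kalnins--Miller--Post~\cite{KMccm}, and ultimately Hietarinta--Grammaticos--Dorizzi--Ramani~\cite{H}. Your write-up is essentially the standard short proof found in those references (in particular the Leibniz-rule computation is how~\cite{KMccm} proceeds), so there is nothing to contrast here beyond observing that you have filled in what the present paper takes for granted.
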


For example, in \cite{PW} the superintegrability of the Post--Winternitz (PW) system of Hamiltonian
\begin{gather}\label{PWe}
H_{\rm PW}=p_r^2+\frac {1}{r^2}\left(p_\phi^2+\frac 14 f_2\left(\frac \phi 2\right)\right)-\frac Q{2r},
\end{gather}
where
\begin{gather*}
f_2(x)= k^2 \left( \frac {\alpha}{\cos ^2 (kx)}+\frac {\beta}{\sin ^2 (kx)}\right),
\end{gather*}
is proved for $k\in \mathbb{Q}$ by writing it as result of the CCM   applied to the Tremblay--Turbiner--Winternitz (TTW) system \cite{TTWcdr,TTW}
\begin{gather}\label{TTWe}
H_{\rm TTW}=p_\rho^2+\frac{1}{\rho^2}\big(p_\theta^2+f_2(\theta)\big)-\tilde E \rho^2.
\end{gather}

Indeed,  by applying to it the CCM, the TTW system (\ref{TTWe}) becomes a Hamiltonian system of Hamiltonian
 \begin{gather*}
\tilde{H}=\frac1{\rho^2}\left(p_\rho^2+\frac{1}{\rho^2}\big(p_\theta^2+f_2(\theta )\big)-E\right),
 \end{gather*}
which coincides with (\ref{PWe}), through the coordinate change $\rho=\sqrt{2r}$, $\phi=2\theta$, and by setting $E=Q/2$.

 In this example a system on the Euclidean plane is mapped into another system on the same manifold. This is not always the case: by applying the CCM to the 2D caged oscillator \cite{KKM, VE}
 \begin{gather}\label{CO}
  H_{\rm co}=\frac 12 p_y^2+\frac 12 p_x^2+\omega^2\big(k^2x^2+y^2\big)+\frac b{x^2}+\frac{c}{y^2},
  \end{gather}
  with $\tilde E=-c$, we get
 \begin{gather}\label{COst}
 \tilde{H}= y^2\left(\frac 12 p_y^2+\frac 12 p_x^2+\omega^2\big(k^2x^2+y^2\big)+\frac{b}{x^2}-E\right),
 \end{gather}
 which is a system on the Poincar\'e half-plane.

\section{Extensions}\label{section3}

 In \cite{TTWcdr} we show that the TTW system can be written as a {\it modified extension}. We recall that a~$(N+2)$-dimensional Hamiltonian~$H$  is a modif\/ied extension of the $N$-dimensional Hamilto\-nian~$L$  if
 \begin{enumerate}\itemsep=0pt
  \item there exist canonical coordinates $(u,p_u)$ such that
 \begin{gather}\label{Hext}
 H=\frac 12 p_u^2+f(u)+\left(\frac {m}n\right)^2\alpha(u) L, \qquad m,n\in \mathbb{N}{\setminus}\{0\},
  \end{gather}
and the Hamiltonian $L$ does not depend on $(u,p_u)$;
\item
for some constants $c$ and $L_0$ not both vanishing, the equation
\begin{gather}
X_L^2(G)=-2(cL+L_0)G,\label{eqG}
\end{gather}
 where $X_L$ is the Hamiltonian vector f\/ield of~$L$, admits a solution~$G$, such that~$X_L(G)\neq 0$;
\item
the functions $\alpha$ and $f$ are those given in Table~\ref{tab1}.
 \end{enumerate}
Then, given the operator  $W$ def\/ined by
 \begin{gather*}
 W(G_\nu)=\left(p_u+\frac{\mu}{\nu^2}\gamma(u) X_L\right)^2(G_\nu)+\delta(u)G_\nu,
 \end{gather*}
 with $(\mu,\nu)=(m,n)$ if $m$ is even, $(\mu,\nu)=(2m,2n)$ if $m$ is odd, $\gamma$ and $\delta$ def\/ined as in Table~\ref{tab1}
 and $G_\nu$ being the $\nu$-th element  of
the recursion
\begin{gather}\label{eqGn}
 G_1=G, \qquad G_{\nu+1}=X_L(G_1)G_\nu+\frac 1\nu G_1X_L(G_\nu),
\end{gather}
we have  that
\begin{gather*}
K=W^{\frac \mu 2}(G_\nu)
\end{gather*}
is a f\/irst integral of~$H$.

In Table~\ref{tab1},
 $A$ and $\kappa$ are arbitrary constants and
 the functions $S_\kappa$ and $T_\kappa$ are the trigonometric tagged functions
 \begin{gather*}
 S_\kappa(x)= \begin{cases}
 \dfrac{\sin\sqrt{\kappa}x}{\sqrt{\kappa}}, & \kappa>0, \\
 x, & \kappa=0, \\
 \dfrac{\sinh\sqrt{|\kappa|}x}{\sqrt{|\kappa|}}, & \kappa<0,
 \end{cases}
 \qquad
 C_\kappa(x)= \begin{cases}
 \cos\sqrt{\kappa}x, & \kappa>0, \\
 1, & \kappa=0, \\
 \cosh\sqrt{|\kappa|}x, & \kappa<0,
 \end{cases}
 \\
 T_\kappa(x)=\frac {S_\kappa(x)}{C_\kappa(x)}
 \end{gather*}
 (see \cite{CDRraz} and \cite{7ter} for a summary of their properties).

 \begin{table}[t]
 \centering \caption{Functions involved in the modif\/ied-extensions of $L$.}\vspace{1mm}\label{tab1}
 \begin{tabular}{|c|c|c|}
 \hline
 &$ c=0$ & $ c \neq 0$ \\
  \hline
  $ \alpha = -{\gamma}'= $&$A$ &  $\dfrac {c}{S_\kappa^2( c u)}$ \tsep{5pt}\\
  $f = \dfrac{m^2}{n^2}L_0\gamma^2 +\dfrac{f_0}{\gamma^2}= $ & $  \dfrac{m^2}{n^2}L_0A^2u^2 +\dfrac{f_0}{A^2u^2}$ &
  $\dfrac {m^2}{n^2}\dfrac {L_0}{T_\kappa^2(c u)}+f_0 T_{\kappa}^2(c u)$  \\
 $ \gamma = $& $-Au$ & $\dfrac 1{T_\kappa( c u)}$\cr
 $\delta = \dfrac{2f_0}{\gamma^2} = $ & $\dfrac{2f_0}{A^2u^2} $ & $2f_0T^2_\kappa( c u) $  \bsep{6pt}\\
 \hline
 \end{tabular}
 \end{table}

We remark that
 \begin{itemize}\itemsep=0pt
 \item
 If (\ref{eqG}) has a solution  for $c\neq 0$, then  we may assume without loss of generality $L_0=0$, because $L$ is determined up to additive constants.

\item
 In the case of natural Hamiltonians $L$, the possibility of f\/inding extensions of $L$ is strictly related to the geometry of the base manifold of $L$ \cite{CDRgen}. An intrinsic characterization of extended Hamiltonians $H$ is given in~\cite{CDRgen}.

 \item
  The extension $H$ of a   Hamiltonian $L$ with $k$ functionally independent f\/irst integrals is a Hamiltonian with $k+2$ functionally independent f\/irst integrals~\cite{TTWcdr}. Indeed,  $H$ itself and $W^{\frac \mu 2}(G_\nu)$ are  functionally independent f\/irst integrals of $H$ together with $L$ and all its functionally independent f\/irst integrals. If~$L$ is maximally superintegrable, then also $H$ is.

 \item The dynamical equations in $(u,p_u)$ are always separated from those in the variables $(q^i,p_i)$ of $L$, being $L$ a constant of motion for $H$.
  \end{itemize}

 \section{Coupling-constant metamorphosis of extended Hamiltonians}\label{section4}

 By Theorem~\ref{theorem1}, it follows that the CCM can be applied to any modif\/ied extension (\ref{Hext}) by setting $\tilde E=-f_0$, $U=1/\gamma^2$. Therefore, in this case we have
 \begin{gather}\label{Hhat}
 \hat H=\frac 12 p_u^2-\left(\frac {m}n\right)^2\gamma' L +\dfrac{m^2}{n^2}L_0\gamma^2 ,
\\
\label{WE}
  W=\left(p_u+\frac{\mu}{\nu^2}\gamma(u) X_L\right)^2-2\frac{\tilde{E}}{\gamma^2} .
  \end{gather}
Moreover, the function
  \begin{gather}\label{Htil}
  \tilde H = \gamma^2\big(\hat H- E\big)=
  \frac {\gamma^2}2 p_u^2-\left(\frac {m}n\right)^2\gamma^2\gamma'(u) L +\dfrac{m^2}{n^2}L_0\gamma^4-E\gamma^2
  \end{gather}
  is the transformed Hamiltonian and
  \begin{gather*}
  \tilde K=\big(W^{\frac \mu 2}(G_\nu)\big)_{|\tilde E=\tilde H},
  \end{gather*}
  the transformed f\/irst integral of $\tilde H$.

A natural question is if $\tilde K$ is again given by a power of some operator applied to some function.
  The main result of this paper is that the answer is positive: $\tilde K$ can be computed  by applying~$\mu/2$ times
the operator
 \begin{gather}\label{tilW}
  \tilde W=(W)|_{\tilde E=\tilde H}=\left(p_u+\frac{\mu}{\nu^2}\gamma X_L\right)^2+2\big(E-\hat H\big),
 \end{gather}
 to the same function $G_\nu$ used for the determination of $K$. Indeed,
  \begin{proposition} \label{P2}
  The transformed first integral $\tilde K$ of \eqref{Htil}  is $\tilde W^{\frac \mu 2}(G_\nu)$,
  where
  \begin{gather}\label{tilWe}
  \tilde W=2\left( \frac{\mu}{\nu^2}\gamma p_uX_L -\frac{\mu^2}{\nu^2}\big(\big(c\gamma^2-\gamma'\big)L+\gamma^2 L_0\big) +E\right)
  \end{gather}
  and $G_\nu$ is computed via the recursion~\eqref{eqGn}.
 \end{proposition}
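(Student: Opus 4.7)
My plan is to start from Theorem~\ref{theorem1} (CCM) applied to the first integral $K=W^{\mu/2}(G_\nu)$ of the modified extension~$H$: the theorem gives $\tilde K=K|_{\tilde E=\tilde H}$, so the content of the proposition is that this scalar substitution in the polynomial $K(\tilde E)$ of degree $\mu/2$ in $\tilde E$ coincides with the iterated action of a single differential operator $\tilde W$ on $G_\nu$, which one can then write in the explicit form~\eqref{tilWe}.

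The pivotal technical fact is that $\tilde H$ commutes with $W$. The operator $W=(p_u+\frac{\mu}{\nu^2}\gamma X_L)^2-2\tilde E/\gamma^2$ is affine in $\tilde E$ and its only non-trivial differential part is $X_L$; meanwhile $\tilde H=\gamma^2(\hat H-E)$ depends only on $u$, $p_u$, $E$ and~$L$, so $X_L(u)=X_L(p_u)=X_L(L)=0$ forces $X_L(\tilde H)=0$ and hence $A(\tilde H F)=\tilde H A(F)$ for every $F$, with $A=(p_u+\frac{\mu}{\nu^2}\gamma X_L)^2$. A straightforward induction on $n$ then yields $\tilde W^n(G_\nu)=W^n(G_\nu)|_{\tilde E=\tilde H}$ for $\tilde W:=W|_{\tilde E=\tilde H}$; taking $n=\mu/2$ and using $\tilde H/\gamma^2=\hat H-E$ gives the second-order form~\eqref{tilW} of~$\tilde W$ and identifies $\tilde W^{\mu/2}(G_\nu)$ with $\tilde K$.

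To pass from~\eqref{tilW} to the first-order form~\eqref{tilWe}, I would expand $(p_u+\frac{\mu}{\nu^2}\gamma X_L)^2$ and substitute~\eqref{Hhat} into $2(E-\hat H)$: the $p_u^2$ contributions cancel, leaving a first-order piece plus the residual second-order term $\frac{\mu^2}{\nu^4}\gamma^2 X_L^2$. The latter is converted to a multiplication operator via the identity $X_L^2(G_\nu)=-2\nu^2(cL+L_0)G_\nu$, which I would prove by induction on $\nu$ from the defining relation~\eqref{eqG} and the recursion~\eqref{eqGn} using $X_L(L)=0$. The main obstacle is that this identity is pointwise on $G_\nu$ rather than operatorial, so its use inside $\tilde W^{\mu/2}(G_\nu)$ requires checking that every intermediate function $\tilde W^k(G_\nu)$ still lies in the submodule generated by $G_\nu$ and $X_L(G_\nu)$ over functions of $(u,p_u,L,E)$; on this submodule $X_L^2$ again acts as multiplication by $-2\nu^2(cL+L_0)$, so the first-order form~\eqref{tilWe} may legitimately be used in place of the second-order one~\eqref{tilW} throughout the iteration.
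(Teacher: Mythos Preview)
Your proof is correct and follows essentially the same route as the paper: both use $X_L(\tilde H)=0$ to commute the multiplication by $\tilde H$ through $W$, deduce $(W^{\mu/2})|_{\tilde E=\tilde H}=(W|_{\tilde E=\tilde H})^{\mu/2}$, and then expand and invoke $X_L^2(G_\nu)=-2\nu^2(cL+L_0)G_\nu$ to obtain the first-order form~\eqref{tilWe}. The subtlety you flag---that this last identity is pointwise rather than operatorial---is handled in the paper by the remark $\tilde W X_L=X_L\tilde W$, which is exactly your observation that the submodule generated by $G_\nu$ and $X_L(G_\nu)$ over functions of $(u,p_u,L,E)$ is $\tilde W$-stable; your phrasing is more explicit, but the content is the same.
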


 \begin{proof}
We consider the iterated application of $\tilde W$. Being  $W$ and $\tilde{H}$ given by~(\ref{WE})
 and~(\ref{Htil}) respectively, we have
 $W(\tilde H)=\tilde H W$ because $X_L(\tilde{H})=0$. Moreover, $\tilde W X_L=X_L\tilde W$, therefore
  \begin{gather}\label{me}
  \big(W^{\frac \mu 2}\big)|_{\tilde E=\tilde H}= \big(W|_{\tilde E=\tilde H}\big)^{\frac \mu 2}=\tilde W^{\frac \mu 2}.
  \end{gather}
  Finally, we remark that $G_\nu$ does not depend on $\tilde{E}$, as well as~$L$.
  The explicit form of $\tilde{W}$ follows by expanding~(\ref{tilW}), inserting~(\ref{Hhat}) in it and by applying the formula~\cite{CDRraz}
  \begin{gather*}
  X_L^2 (G_\nu)=-2\nu^2(cL+L_0)G_\nu.\tag*{\qed}
  \end{gather*}
  \renewcommand{\qed}{}
 \end{proof}

\begin{remark} \label{remark1} Let us consider the  CCM of a natural $2N$-dimensional Hamiltonian with a generic potential  $U(q^1,\ldots,q^N,u)$. If $G_\nu$ and $L$ are independent from~$\tilde E$, then the condition for ha\-ving~(\ref{me}) is, from the proof of Pro\-po\-si\-tion~\ref{P2},
 \begin{gather*}
 X_L(\tilde H)=-\frac 1{U^2}\big(\hat H-E\big)X_L(U)=0,
 \end{gather*}
 that, by requiring its validity on the whole space, is equivalent to
 \begin{gather*}
 X_L(U)=0.
 \end{gather*}
 For $L(q^i,p_i)$  such that  $\partial L/\partial p_i\neq0$, $i=1,\ldots,N$, the condition of above is equivalent to~$U(u)$ and we go back to the result of Proposition~\ref{P2}. Other situations  are possible for dif\/ferent~$L$. We do not consider here the reduction of the system to the submanifold
 $\hat H=E$, i.e., $\tilde H=0$, and its f\/irst integrals.
 \end{remark}

 \begin{remark}\label{remark2}
 Up to a rescaling of $\tilde{u}=\tilde{u}(u)$ such that
 \begin{gather*}
 \frac{d\tilde{u}}{du}=\frac{1}{\gamma(u)}\qquad  \mbox{and} \qquad p_{\tilde{u}}=\gamma p_u,
 \end{gather*}
 the Hamiltonian (\ref{Htil}) is of the form (\ref{Hext}),
 and the operator $\tilde{W}$ def\/ined by~(\ref{tilWe})  becomes
 \begin{gather*}
 \tilde{W}= \frac{2\mu}{\nu^2}p_{\tilde{u}}X_L +2\frac{\mu^2}{\nu^2}\delta_1(\tilde{u})L+ \delta_2(\tilde{u}),
 \end{gather*}
 where the functions $\delta_i$ are given in Table~\ref{tabdelta}.
The general (i.e., independent of~CCM) conditions allowing the existence of
 f\/irst integrals generated by such type of operator will be analysed elsewhere.
 \end{remark}

 \begin{table}[t]
\centering
 \caption{Functions $\delta_1$ and $\delta_2$.}\vspace{1mm}\label{tabdelta}
  		\begin{tabular}{|c|c|c|c|}
 			\hline
 		&$c\neq0,$ $\kappa\neq 0$, 	&$c\neq 0$, $\kappa=0$,  & $c=0$, $L_0\neq 0$\\
 			\hline
 		$\delta_1$ &	$
 			 \dfrac{c\kappa\big(1+C_\kappa(cu)^2\big)}{1-C_\kappa(cu)^2}= \dfrac{c\kappa}{\tanh(\kappa c\tilde{u})}
 			$ &$
 			 \dfrac{2}{cu^2}=\dfrac{1}{\tilde{u}}$ &  $A$ \tsep{10pt} \\
 		$\delta_2$ &	$2E+\dfrac{L_0\mu^2\kappa}{\nu^2}\left(\dfrac{1}{\tanh(c\kappa\tilde{u})}-1\right) $ &
 		$  2E+\dfrac{L_0\mu^2\kappa}{\nu^2c\tilde{u}}$ &
 			$  2E+2\frac{\mu^2}{\nu^2}\dfrac{L_0A^2}{e^{2A\tilde{u}}}$\tsep{10pt}\bsep{8pt}  \\ 		
 			\hline
 		\end{tabular}
 \end{table}

 \subsection{Example 1: the TTW system}
 In \cite{TTWcdr} it is shown that the TTW system~(\ref{TTWe}) is a modif\/ied extension.
 Indeed,
 the extension of the Hamiltonian
 \begin{gather}\label{LTTW}
  L=\frac 12 p_\psi^2+\frac{c_1+c_2\cos \psi}{\sin^2\psi},
 \end{gather}
 (satisfying (\ref{eqG}) for $c=1$, $L_0=0$ and $G=p_\psi \sin\psi$)
 for $\kappa=0$, that is for
   $\gamma=1/u$,
  is
 \begin{gather}\label{TTWex0}
  H=\frac 12 p_u^2+\frac{m^2}{n^2u^2}\left(\frac 12p_\psi^2+\frac{c_1+c_2\cos \psi}{\sin^2\psi}\right)+f_0 u^2.
 \end{gather}
 The rescaling $u=\rho$, $\psi=2k \theta$,  the change of parameters  \begin{gather} \label{param}
 \frac{m}{n}=k,\qquad c_1=\alpha +\beta, \qquad c_2=\beta-\alpha,
 \end{gather} and the position $\tilde{E}=-2f_0$, transform~(\ref{TTWex0}) into the Hamiltonian $H_{\rm TTW}$ of~(\ref{TTWe}) multiplied by~2.

 The PW Hamiltonian (\ref{PWe}), instead, is not a modif\/ied extension, because the Kepler--Coulomb term $\frac Q{2r}$ cannot be included  in the form of~$f$ given in Table~\ref{tab1}.

 By applying the CCM based on  $\tilde E=-f_0$ as in Theorem~\ref{theorem1}  to the Hamiltonian~(\ref{TTWex0}), we get
 \begin{gather}\label{HPWu}
 \tilde{H}= \frac 1{2u^2} p_u^2+\frac{m^2}{n^2u^4}\left(\frac 12p_\psi^2+\frac{c_1+c_2\cos \psi}{\sin^2\psi}\right)-\frac{E}{u^2},
 \end{gather}
 and, by performing the rescaling $u^2=2r$,  we obtain
\begin{gather}\label{HPWr}
\tilde H=  \frac{1}{2}p_r^2+\frac{m^2}{4n^2r^2}\left( \frac{1}{2}p_\psi^2+\frac{c_1+c_2\cos \psi}{\sin^2\psi}\right)-\frac E{2r},
\end{gather}
which is, after (\ref{param}), and the rescaling $\psi=2k\phi$ together with the change of parameters~$Q=2E$, one half of the Hamiltonian~(\ref{PWe}).

Then, the operator generating f\/irst integrals of (\ref{HPWu}) for any rational~$m/n$ is
\begin{gather*}
\tilde{W}=2\left(\frac{\mu}{\nu^2u}p_u X_L-\frac{\mu^2}{\nu^2u^2}L + E\right),
\end{gather*}
with $(\mu,\nu)=(m,n)$ for $m$ even, $(\mu,\nu)=(2m,2n)$ for $m$ odd and where $X_L$ is the Hamiltonian vector f\/ield of (\ref{LTTW}). The function $G_\nu$ is recursively determined by
\begin{gather*}
 G_1=(\sin \psi) p_\psi, \qquad G_{\nu+1}=X_L(G_1)G_\nu+\frac 1\nu G_1X_L(G_\nu).
 \end{gather*}
By the rescaling $u^2=2r$, we get the operator generating f\/irst integrals for~(\ref{HPWr}):
\begin{gather*}
\tilde{W}=\frac{2\mu}{\nu^2}p_r X_L-\frac{\mu^2}{\nu^2r}L + 2E.
\end{gather*}

\subsection{Example 2: the caged anisotropic oscillator}

 In order to write the Hamiltonian on the Poincar\'e half-plane (\ref{COst})  as the CCM of a modif\/ied extension, we
need to express the  caged oscillator Hamiltonian (\ref{CO}) as a modif\/ied extension.
 From Section~\ref{section3}, we know that  the expression of a modif\/ied extension in a  plane  when  $c=0$ is
 \begin{gather}\label{CO2}
H_{m,n}= \frac 12p_u^2-\frac{m^2}{n^2}\gamma'L+\frac {m^2}{n^2}L_0\gamma^2+\frac {f_0}{\gamma^2},
 \end{gather}
 where $\gamma=-Au$. From \cite{TTWcdr} we know that  the most general one-dimensional natural Hamiltonian $L(p_q,q)$ admitting an extension for $c=0$, with $G=(a_1q+a_2)p_q$, is
 \begin{gather*}
 L=\frac 12 p_q^2+\frac {L_0}{4a_1^2}(a_1q+a_2)^2+\frac{c_1}{(a_1q+a_2)^2}+c_2,
 \end{gather*}
 being $A$, $a_i$, $c_i$, $L_0$ real constants.
 By extending  the Hamiltonian  $L$ into (\ref{CO2}), we can  write the CCM of $H_{m,n}$ according to (\ref{Htil}), obtaining
  \begin{gather}\label{PWe2}
  \tilde H_{m,n}= \gamma^2\left(\frac 12 p_u^2-\frac{m^2}{n^2}\gamma'L+\frac {m^2}{n^2}L_0\gamma^2-E'\right).
  \end{gather}
By comparing (\ref{COst}) with  (\ref{PWe2}) we obtain that the kinetic terms coincide for
\begin{gather*}
u=y,  \qquad \gamma=-y, \qquad q=\frac mn x+x_0.
\end{gather*}
The choice of $x_0=-\frac{a_2}{a_1}$ allows to write $a_1q+a_2=a_1\frac mn x$ and, consequently, we have the identif\/ications
\begin{gather*}
\frac {m^2}{n^2}=4k^2, \qquad L_0=\frac{\omega^2}{4k^2}, \qquad c_1=a_1^2b, \qquad E'=E+4c_2k^2.
\end{gather*}
 Therefore, the f\/irst integral is
 \begin{gather*}
 K=\tilde W ^{\frac \mu 2}G_\nu,
 \end{gather*}
 where
 \begin{gather*}
 \tilde W=-\frac{2\mu}{\nu^2}yp_yX_L-8k^2- 4\omega^2 y^2+2E+8c_2k^2,
 \\
 L=\frac 1{8k^2}p_x^2+\frac{\omega^2}{4}x^2+\frac b{4k^2x^2}+c_2,
 \end{gather*}
 with  $(\mu,\nu)=(m,n)$ for $m$ even, $(\mu,\nu)=(2m,2n)$ for~$m$ odd, and~$G_\nu$ given by the recursion~\cite{TTWcdr}
 \begin{gather*}
  G_1=a_1xp_x, \qquad G_{n+1}=X_L(G_1)G_n+\frac 1n G_1X_L(G_n).
  \end{gather*}
We remark that other choices of rescaling and changes of parameters are possible, leading to dif\/ferent (but essentially equivalent) $L$ and $\tilde{W}$.

 \section{Conclusions}\label{section5}

 In this article we proved that,
 for any modif\/ied extension, there exists a specif\/ic CCM of it which maintains the most distinctive property of an extension: the determination of a f\/irst integral via powers of an operator applied to a suitable function. This fact suggests the def\/inition of a~new class  of Hamiltonian systems  including the Post--Winternitz system as well as all extended Hamiltonian systems.

 In \cite{CDRGhent} we introduced the idea of warped product of Hamiltonian systems. Given two symplectic, or Poisson, manifolds $M$ and $N$ with Hamiltonians $H_M$ and $H_N$, we consider on $M\times N$, endowed with the product of the symplectic or Poisson structures of $M$ and $N$, the Hamiltonian
 \begin{gather*}
 H=\alpha H_M +\beta H_N,
 \end{gather*}
 where $\alpha$ and $\beta$ are functions on $M\times N$ and we call it the ``warped product'' of $H_M$ and $H_N$. If $H_M$, $H_N$ are natural Hamiltonians, $M$, $N$ are cotangent bundles with Riemannian man\-ifolds~$B_M$, $B_N$ as base manifolds and~$\alpha$, $\beta$ are functions of the product $B_M\times B_N$, then the metric tensor of $H$ is the standard warped product of the metrics on~$B_M$,~$B_N$.

 All extended Hamiltonian systems, together with their CCM considered in this paper, are clearly the warped product of two Hamiltonians: one depending on $(u,p_u)$ solely, the other being~$L$. Indeed, the symplectic structure we are using on~$H$ is simply the product  of the lower-dimensional canonical symplectic structures. Therefore, we may imagine a class of Hamiltonian systems of ``warped-power" type determined as follows
 \begin{itemize}\itemsep=0pt
 	\item their Hamiltonian $H$ is the warped product of a f\/inite number of other Hamiltonians,
 	
 	\item $H$ admits a constant of the motion determined by the power of an operator applied to some suitable function def\/ined on the same domain of~$H$.
 	\end{itemize}
Such a class includes all the systems we call ``extensions of Hamiltonian systems'', together with their CCM as described in this paper, and  is naturally parametrized by a natural number at least: the power of the operator generating the f\/irst integral.

Finally, Remark~\ref{remark2} suggests a new direction of research, by proposing an alternative form of the operator involved in the extension procedure.

\pdfbookmark[1]{References}{ref}
\LastPageEnding

\end{document}